\documentclass{llncs}

\usepackage{amsmath}
\usepackage{amssymb}
\usepackage{algorithm}
\usepackage{graphicx}
\usepackage{times}
\usepackage{float}

\floatstyle{ruled}
\newfloat{algo}{thp}{lop}
\floatname{algo}{Algorithm}

\newfloat{eqncap}{thp}{lop}
\floatname{eqncap}{Equation}

\newfloat{examplecap}{thp}{lop}
\floatname{examplecap}{Example}

\newcommand{\ceil}[1]{\lceil{#1}\rceil}
\newcommand{\floor}[1]{\lfloor{#1}\rfloor}

\newcommand{\Zp}{\mathbb{Z}_p}

\begin{document}

\title{Fast Pseudo-Random Fingerprints}

\author{Yoram Bachrach\inst{1}, Ely Porat\inst{2}}
\institute{
Microsoft Research, Cambridge, UK \email{(yobach@micorosft.com)}
\and
Bar-Ilan University, Ramat Gan, Israel \email{(porately@cs.biu.ac.il)} \\
}


\maketitle

\begin{abstract}
We propose a method to exponentially speed up computation of various fingerprints, such as the ones used to compute similarity and rarity in massive data sets. Rather then maintaining the full stream of $b$ items of a universe $[u]$, such methods only maintain a concise fingerprint of the stream, and perform computations using the fingerprints. The computations are done approximately, and the required fingerprint size $k$ depends on the desired accuracy $\epsilon$ and confidence $\delta$. Our technique maintains a single bit per hash function, rather than a single integer, thus requiring a fingerprint of length $k = O(\frac{\ln \frac{1}{\delta}}{\epsilon^2})$ bits, rather than $O(\log u \cdot \frac{\ln \frac{1}{\delta}}{\epsilon^2})$ bits required by previous approaches. The main advantage of the fingerprints we propose is that rather than computing the fingerprint of a stream of $b$ items in time of $O(b \cdot k)$, we can compute it in time $O(b \log k)$. Thus this allows an exponential speedup for the fingerprint construction, or alternatively allows achieving a much higher accuracy while preserving computation time. Our methods rely on a specific family of pseudo-random hashes 
for which we can quickly locate hashes resulting in small values. 
\end{abstract}     

\section{Introduction}
\label{sec:intro}  

Hashing is a key tool in processing massive data sets. Many uses of hashing in various applications require computing many hash functions in parallel. In this paper we present a technique that ``ties together'' many hashes in a novel way, which enables us to speed up such algorithms by an \emph{exponential factor}. Our method also works for some complicated hash function such as min-wise independent families of hashes. In this paper we focus on producing an optimal similarity fingerprint using this method, but our technique is \emph{general}, as it is easy to use our approach to speed up other hash intensive computations.
One easy example where our technique applies is approximating the number of distinct elements from~\cite{alon1999space}. A another example, which requires a slightly stronger analysis, is computing of $L_p$ sketches~\cite{indyk2006stable} for $0\le p\le 2$ .  

Min-wise independent families of hash functions, which we call \emph{MWIFs} for short, were introduced in~\cite{mulmuley1996randomized,broder2000min}. Computations using MWIFs have been used in many algorithms for processing massive data streams. The properties of MWIFs 
allow maintaining concise descriptions of massive streams. These descriptions, called ``fingerprints'' or ``sketches'', allow computing properties of these streams and relations between them. Examples of such ``fingerprint'' computations include data summerization and subpopulation-size queries~\cite{cohen2007summarizing,cohen2007sketching}, greedy list intersection~\cite{krauthgamer2008greedy}, approximating rarity and similarity for data streams~\cite{datar2002estimating}, collaborative filtering fingerprints~\cite{bachrach2009sketching,bachrach2009spire,bachrach-fingerprinting} and estimating frequency moments~\cite{alon1999space}. 
Another motivation for studying MWIFs is reducing the amount of randomness used by algorithms~\cite{broder2003derandomization,mulmuley1996randomized,broder2000min}. 

Recent research reduced the amount of information stored, while accurately computing properties data streams. Such techniques improve the \emph{space complexity}, but much less attention has been given to \emph{computation complexity}. For example, many streaming algorithms compute huge amounts of hashes, as they apply \emph{many} hashes to each element in a very long stream of elements. This leads to a high computation time, not always tractable for many applications. 

Our main contribution is a method allowing an \emph{exponential} speedup in \emph{computation time} for constructing fingerprints of massive data streams. Our technique is 
\emph{general}, and can speed up many processes that apply many random hashes. The heart of the method lies in using a specific family of pseudo-random hashes shown to be approximately-MWIF~\cite{indyk2001small}, and for which we can quickly locate the hashes resulting in a small value of an element under the hash. Similarly to~\cite{patrascu:k} we use the fact that members of the family are pairwise independent between themselves. 
We also extend the technique and show one can maintain just a \emph{single} bit rather than the full element IDs, thus improving the fingerprint size. Independently of us~\cite{li2010b} also considered storing few bits per hash function, but focused only on minimizing storage rather than computation time.




\subsection{Preliminaries}
\label{sec:perlim}  

Let $H$ be a family of functions over the same source $X$ and target $Y$, so each $h \in H$ is a function $h : X \rightarrow Y$, where $Y$ is a completely ordered set.
We say that $H$ is min-wise independent if, when randomly choosing a function $h \in H$, for any subset $C \subseteq X$, any $x \in C$ has an equal probability of being the minimal after applying $h$.

\begin{definition}{$H$ is min-wise independent} (MWIF), if for all $C \subseteq X$, for any $x \in C$, 
$Pr_{h \in H}[h(x) = min_{a \in C}h(a)] = \frac{1}{|C|}$
\end{definition}

\begin{definition}{$H$ is a $\gamma$-approximately min-wise independent} ($\gamma$-MWIF), if for all $C \subseteq X$, for any $x \in C$, 
$ \left| Pr_{h \in H}[h(x) = min_{a \in C}h(a)] - \frac{1}{|C|} \right| \leq \frac{\gamma}{|C|}$
\end{definition}

\begin{definition}{$H$ is $k$-wise independent}, if for all $x_1,x_2,\ldots,x_k,y_1,y_2,\ldots,y_k \subseteq X$,
$Pr_{h \in H}[ (h(x_1) = y_1) \wedge \ldots \wedge (h(x_k) = y_k) ] = \frac{1}{|X|^k}$
\end{definition}

\section{Pseudo-Random Family of Hashes}
\label{l_sect_hash_family}
We describe the hashes we use.
Given the universe of item IDs $[u]$, consider a big prime $p$, such that $p>u$. Consider taking random coefficients for a $d$-degree polynomial in $\Zp$. Let $a_0, a_1, \ldots, a_d \in [p]$ be chosen uniformly at random from $[p]$, and the following polynomial in $\Zp$: $f(x) = a_0 + a_1 x + a_2 x^2 + \ldots + a_d x^d$. We denote by $F_d$ the family of all $d$-degree polynomials in $\Zp$ with coefficients in $\Zp$, and later choose members of this family uniformly at random. Indyk~\cite{indyk2001small} shows that choosing a function $f$ from $F_d$ uniformly at random results in $F_d$ being a $\gamma$-MWIF for $d = O(\log\frac{1}{\gamma})$. 

Randomly choosing $a_0,\ldots,a_d$ is equivalent to choosing a member of $F_d$ uniformly at random, so $f(x) = a_0 + a_1 x + a_2 x^2 + \ldots + a_d x^d$ is a hash chosen at random from the $\gamma$-MWIF $F_d$. Similarly, consider $b_0, b_1, \ldots, b_d \in [p]$ be chosen uniformly at random from $[p]$, and $g(x) = b_0 + b_1 x + b_2 x^2 + \ldots + b_d x^d$, which is also a hash chosen at random from the $\gamma$-MWIF $F_d$. Now consider the hashes $h_0(x) = f(x), h_1(x) = f(x) + g(x), h_2(x) = f(x) + 2 g(x), \ldots, h_i(x) = f(x) + i g(x), \ldots, h_{k-1}(x) = f(x) + (k-1) g(x)$. We call this random construction procedure for $f(x),g(x)$ the \emph{base random construction}, and the construction of $h_i$ the \emph{composition construction}. We prove properties of such hashes. We denote the probability of an event $E$ when the hash $h$ is constructed by choosing $f,g$ using the base random construction and composing $h(x) = f(x) + i \cdot g(x)$ (for some $i \in [p]$) as $Pr_{h}(E)$. 

\begin{lemma}[Uniform Minimal Values]
Let $f,g$ be constructed using the base random construction, using $d = O(\log\frac{1}{\gamma})$. 
For any $z \in [u]$, any $X \subseteq [u]$ and any value $i$ used to compose $h(x)=f(x)+i \cdot g(x)$: 
$Pr_{h} [h(z) < min_{y\in X} (h(y)] = (1\pm \gamma) \frac{1}{|X|}$. 
\end{lemma}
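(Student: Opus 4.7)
The plan is to reduce the claim to Indyk's theorem that $F_d$ is a $\gamma$-MWIF for $d = O(\log\frac{1}{\gamma})$. The key observation is that for every fixed $i \in [p]$, the composed hash $h_i(x) = f(x) + i\cdot g(x)$ is itself marginally distributed as a uniformly random polynomial from $F_d$, and therefore inherits the approximate min-wise property directly from the cited result.

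To establish this distributional claim, I would write $h_i(x) = \sum_{j=0}^{d} (a_j + i\,b_j)\,x^j$ and examine the coefficient vector $(c_0,\dots,c_d)$ with $c_j = a_j + i\,b_j$. The pairs $(a_j,b_j)$ are mutually independent across $j$ by the base random construction, so the $c_j$ are independent as well. For the marginal: when $i = 0$, $c_j = a_j$ is uniform by construction; when $i \neq 0$, conditioning on any value of $b_j$ and using that $a_j \mapsto a_j + i\,b_j$ is a bijection on $\Zp$ shows $c_j$ is again uniform on $\Zp$. Hence $(c_0,\dots,c_d)$ is uniform on $\Zp^{d+1}$, so $h_i$ is distributed as a uniformly random element of $F_d$.

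Given this, applying Indyk's result to the set $C = X$ yields $\Pr[h(z) = \min_{y \in X} h(y)] = (1 \pm \gamma)/|X|$, matching the lemma (reading $z \in X$ and interpreting the strict inequality in the statement modulo ties, which occur with probability $O(|X|/p)$ and can be made negligible by taking $p$ sufficiently large). I expect this bookkeeping around strict versus weak inequalities to be the only delicate point; the substantive content is the one-line observation about the marginal distribution of $h_i$. Note that only the marginal law of a \emph{single} $h_i$ is used here — any statement about the joint behaviour of distinct $h_i$'s would require a separate pairwise-independence argument (the one mentioned in the introduction via \cite{patrascu:k}), but that is not needed for this lemma.
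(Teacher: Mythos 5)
Your proposal is correct and follows essentially the same route as the paper: both arguments show that for each fixed $i$ the composed hash $h_i$ is marginally uniform over $F_d$ (via the bijection $a_j \mapsto a_j + i\,b_j$ on each coefficient) and then invoke Indyk's $\gamma$-MWIF guarantee for $F_d$. Your version is if anything slightly cleaner, since you make explicit the independence of the coefficients across $j$ that the paper uses implicitly when asserting $\Pr_h[h \equiv p] = 1/p^{d+1}$, and you correctly flag the strict-inequality/ties bookkeeping that the paper's statement also glosses over.
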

\begin{proof}
Fix $i$, $z \in [u]$ and $X \subseteq [u]$, construct $f,g$ using the base random construction, and compose $h(x)=f(x)+i \cdot g(x)$. Note in $i\cdot g(x) = i \cdot (b_0 + b_1 x + \ldots b_d x^d)$,  the coefficient of $x^j$ is $q = (i \cdot b_j) \mod p$. Given a value $s \in [p]$ There is exactly one value in $r \in [p]$ such that $(q+r) \mod p = s$. Thus, for any $s \in [p]$, the probability that the coefficient of $x^j$ in $h(x)$ is $s$ is $\frac{1}{p}$. Therefor $Pr_h[h(x)\equiv p(x)]=\frac{1}{p^{d+1}}=\frac{1}{F_d}$.
We have:
$Pr_{h} [h(z) < min_{y\in X} h(y)]= 
\sum_{p(x) \in F_d} Pr_{h} [      h(z) < min_{y\in X} h(y) | h(x) \equiv f(x) + i \cdot g(x)\equiv p(x)] \cdot Pr_{h}[h(x) \equiv p(x)] = 
\sum_{p(x) \in F_d} \frac{Pr_{h} [h(z) < min_{y\in X} h(y) | h(x) \equiv p(x)] }{|F_d|}= \sum_{p(x) \in F_d} \frac{Pr[p(z) < min_{y\in X} (p(y))] }{|F_d|}=
Pr_{p(x)   \in F_d} [p(z) < min_{y\in X} (p(y))]=(1\pm \gamma) \frac{1}{|X|}$.
If $p(x)$ is a polynom such that for any $z \in \Zp$ we have $p(z) < min_{y\in X} (p(y))$, then we have $Pr[p(z) < min_{y\in X} (p(y))]=1$, and otherwise $Pr[p(z) < min_{y\in X} (p(y))]=0$. Thus we get $\sum_{p(x) \in F_d} \frac{Pr[p(z) < min_{y\in X} (p(y))] }{|F_d|}= Pr_{p(x)   \in F_d} [p(z) < min_{y\in X} (p(y))]$.
The last transition uses the fact that $F_d$ is an $\gamma$-MWIF, which requires $d = O(\log\frac{1}{\gamma})$. 
\end{proof}

\begin{lemma}[Pairwise Interaction]
\label{l_lem_pairwise_interact}
Let $f,g$ be constructed using the base random construction, using $d = O(\log\frac{1}{\gamma})$. 
For all $x_1,x_2 \in [u]$ and all $X_1,X_2 \subseteq [u]$, and all $i \neq j$ used to compose $h_i(x)=f(x)+i \cdot g(x)$ and $h_j(x)=f(x)+j\cdot g(x)$: 
$$Pr_{f,g \in F_d} [ (h_i(x_1) < min_{y\in X_1} h_i(y)) \wedge (h_j(x_2) < min_{y\in X_2} h_i(y)) ] =(1\pm \gamma)^2\frac{1}{|X_1| \cdot |X_2|}$$
\end{lemma}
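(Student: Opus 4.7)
The plan is to reduce this to the single-hash lemma by showing that the pair $(h_i, h_j)$ is jointly uniformly distributed over $F_d \times F_d$ when $f,g$ are drawn from the base random construction and $i \neq j$. Once that independence is established, the two events factor and each factor is controlled by the Uniform Minimal Values lemma proved above.

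First I would note that the transformation $(f,g) \mapsto (h_i, h_j) = (f + i\cdot g,\; f + j\cdot g)$ is coefficient-wise linear: for each degree $\ell \in \{0,1,\ldots,d\}$, the coefficient of $x^\ell$ in $(h_i,h_j)$ is obtained from the coefficients $(a_\ell,b_\ell)$ of $(f,g)$ by multiplying by the $2 \times 2$ matrix $\bigl(\begin{smallmatrix} 1 & i \\ 1 & j \end{smallmatrix}\bigr)$ over $\mathbb{Z}_p$. Since $i \neq j$ and $p$ is prime, this matrix has determinant $j-i \not\equiv 0 \pmod p$, hence it is invertible. Consequently the full map $F_d \times F_d \to F_d \times F_d$ sending $(f,g)$ to $(h_i,h_j)$ is a bijection, so the uniform distribution on $(f,g)$ pushes forward to the uniform distribution on $(h_i,h_j)$. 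In particular, $h_i$ and $h_j$ are independent, and each is marginally uniform on $F_d$.

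Next I would condition on the realization of $h_j$: by the independence just established, for every fixed polynomial $p_j \in F_d$, the conditional distribution of $h_i$ given $h_j = p_j$ is still uniform over $F_d$. The Uniform Minimal Values lemma then gives $Pr_{h_i}[h_i(x_1) < \min_{y \in X_1} h_i(y)] = (1 \pm \gamma)/|X_1|$ for any $i$, and symmetrically $(1 \pm \gamma)/|X_2|$ for the $h_j$ event. Multiplying the two independent probabilities yields the claimed $(1 \pm \gamma)^2 / (|X_1|\cdot|X_2|)$.

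The only subtle point — and really the crux of the argument — is the bijectivity step; everything else is assembly. I would be careful to spell out that invertibility requires $j-i \not\equiv 0 \pmod p$, which is exactly the hypothesis $i \neq j$ (interpreted in $\mathbb{Z}_p$, with $i,j$ coming from the index set used in the composition construction). Given the pairwise-independent / linear-algebra view employed in the prior lemma and the reference to \cite{patrascu:k}, this is the natural and intended route.
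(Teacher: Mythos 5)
Your proposal is correct and follows essentially the same route as the paper: the paper likewise observes that for $i \neq j$ each coefficient pair $(a_\ell,b_\ell)$ is determined by a uniquely solvable $2\times 2$ linear system, so $(f,g)\mapsto(h_i,h_j)$ is a bijection making $(h_i,h_j)$ uniform on $F_d\times F_d$, and then factors the joint probability and applies the $\gamma$-MWIF property to each factor. Your explicit determinant argument ($j-i\not\equiv 0 \pmod p$) just spells out the invertibility that the paper asserts implicitly.
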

\begin{proof}
Given $p_1(x) \in F_d = u_0 + u_1 x + \ldots + u_d x^d$ and $p_2(x)  \in F_d = v_0 + v_1 x + \ldots + v_d x^d$, there is \emph{exactly one} pair of polynoms $f(x),g(x) \in F_d$ such that both $f(x)+i \cdot g(x) = p_1(x)$ and $f(x) + j \cdot g(x) = p_2(x)$. Each coefficient location $l \in [d]$ results in two equations with two unknowns in $\Zp$, with a single solution $(a_l,b_l)$ (where $a_l$ is the coefficient of $x^l$ in $f(x)$, and $b_l$ is the coefficient of $x^l$ in $g(x)$. 

Fix $i \neq j$, $x_1,x_2 \in [u]$ and $X_1,X_2 \subseteq [u]$, construct $f,g$ using the base random construction, and compose $h_i(x)=f(x)+i \cdot g(x)$, $h_j(x)=f(x)+j \cdot g(x)$.
For brevity, denote $m^i_1 = \min_{y\in X_1} h_i(y)$. Similarly, denote $m^j_2 = \min_{y\in X_2} h_j(y)$.
We have:
$Pr_{f,g \in F_d} [ (h_i(x_1) < m^i_1) \wedge (h_j(x_2) < m^j_2)] =
\sum_{p_1,p_2 \in F_d}              Pr [     (h_i(x_1) < m^i_1)  \wedge  (h_j(x_2) < m^j_2) | (h_i(x)\equiv p_1(x) \wedge h_j(x)\equiv p_2(x))] \cdot Pr [(h_i(x)\equiv p_1(x) \wedge h_j(x)\equiv p_2(x))] =
\sum_{p_1,p_2 \in F_d}              \frac{Pr[(h_i(x_1) < m^i_1)  \wedge  (h_j(x_2) < m^j_2) | (h_i(x)\equiv p_1(x) \wedge h_j(x)\equiv p_2(x))]}{|F_d|^2} 
$.
Thus, 
$Pr_{f,g \in F_d} [ (h_i(x_1) < m^i_1) \wedge (h_j(x_2) < m^j_2)] =
\sum_{p_1,p_2 \in F_d}              \frac{Pr[(p_1(x_1) < m^i_1)  \wedge  (p_2(x_2) < m^j_2)]}{|F_d|^2} =\sum_{p_1,p_2 \in F_d}              \frac{Pr[ p_1(x_1) < m^i_1 ] \cdot Pr[p_2(x_2) < m^j_2]}{|F_d|^2} =
\sum_{p_1\in F_d} \sum_{p_2\in F_d} \frac{Pr[ p_1(x_1) < m^i_1 ]}{|F_d|} \cdot \frac{Pr[p_2(x_2) < m^j_2]}{|F_d|} =
(1\pm \gamma)^2\frac{1}{|X_1| \cdot |X_2|} $
\end{proof}

\section{Fingerprinting Using Pseudo-Random Hashes}
\label{l_sect_pairwise_block_const_conf}

Several methods were suggested for building fingerprints for approximating relations between massive datasets, such as the Jackard similarity (see~\cite{broder2000min} for example). 
Given a universe $U$, where $|U|=u$, consider $C_1,C_2$, where each $C_i \subseteq U$ is described as a set $|C_i|$ integers in $[u]$ (we use $[u]$ to denote $\{1,2,\ldots,u\})$. The Jackard similarity is $J_{1,2}=\frac{|C_1 \cap C_2|}{|C_1 \cup C_2|}$. 
Many fingerprints rely on 
applying many hashes to each elements in the long streams. We use a the hashes of Section~\ref{l_sect_hash_family} to exponentially speed up such computations. We use pseudo-random effects in this hash, so we must relax the MWIF requirement to a pairwise independence requirement (2-wise independence). 

For completeness, we briefly consider previously suggested approaches for approximating Jackard similarity~\cite{broder2000min}. Let $h \in H$ be a randomly chosen function from a MWIF $H$. We can apply $h$ on all elements $C_1$ and examine the minimal integer we get, $m^h_1 = \arg \min_{x \in C_1} h(x)$. We can do the same to $C_2$ and examine $m^h_2 = \arg \min_{x \in C_2} h(x)$. Fingerprints for estimating the Jackard similarity are based on computing the probability that $m_1 = m_2$:
$Pr_{h \in H} [m^h_1 = m^h_2] = Pr_{h \in H} [\arg \min_{x \in C_1} h(x) = \arg \min_{x \in C_2} h(x)]$. 

\begin{theorem}[Jackard and MWIF Collision Probability]
$Pr_{h \in H} [m^h_i = m^h_j] = J_{i,j}$.
\label{l_thm_prob_collision_jackard}
The proof is given in~\cite{broder2000min}, and in the appendix for completeness. 
\end{theorem}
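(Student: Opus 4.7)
The plan is to reduce the two-set collision event to a single-set argmin event over the union $C_1 \cup C_2$, where the MWIF property applies directly. Let $h \in H$ be chosen uniformly at random and set $C = C_1 \cup C_2$, with $m^h = \arg\min_{x \in C} h(x)$. I would first argue that, since $C_1 \subseteq C$ and $C_2 \subseteq C$, whichever element of $C$ achieves the overall minimum also achieves the minimum on any subset of $C$ that contains it.

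Using this observation, I would establish the key equivalence: $m^h_1 = m^h_2$ if and only if $m^h \in C_1 \cap C_2$. Indeed, if $m^h \in C_1 \cap C_2$ then $m^h$ is simultaneously the argmin over $C_1$ and over $C_2$, so $m^h_1 = m^h_2 = m^h$. Conversely, if $m^h \in C_1 \setminus C_2$, then $m^h_1 = m^h$ but $m^h_2$ must be some other element of $C_2$ (which attains a strictly larger hash value), so $m^h_1 \neq m^h_2$; the symmetric case is analogous.

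With the equivalence in hand, the MWIF property of $H$ gives $Pr_{h \in H}[m^h = x] = \tfrac{1}{|C|}$ for every $x \in C$. Summing over $x \in C_1 \cap C_2$ yields
\[
Pr_{h \in H}[m^h_1 = m^h_2] \;=\; Pr_{h \in H}[m^h \in C_1 \cap C_2] \;=\; \frac{|C_1 \cap C_2|}{|C_1 \cup C_2|} \;=\; J_{1,2}.
\]

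The only technical wrinkle, and the sole place that requires care, is making sure the argmin is well-defined, i.e.\ that ties have probability zero (or are broken consistently). For a MWIF mapping into a large enough ordered range this is immediate: with probability $1$ the restriction of $h$ to the finite set $C$ is injective, so $\arg\min$ is a single element and the three cases above partition the probability space. No further obstacle arises; the argument is essentially a one-line consequence of the definition of min-wise independence once the equivalence above is stated.
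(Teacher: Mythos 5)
Your proof is correct and follows essentially the same route as the paper's appendix proof: decompose by where the overall minimum over $C_1 \cup C_2$ falls, observe that the two subset-minima coincide exactly when that minimum lies in $C_1 \cap C_2$, and apply min-wise independence to the union. Your closing remark about ties corresponds to the paper's footnote that hash collisions contribute only a negligible probability, so the two arguments match in substance and in the caveat they acknowledge.
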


Similarly, regarding a hash $h$ from a $\gamma$-MWIF, \cite{broder1998resemblance,broder2000min} shows that: 
\begin{theorem}
$|Pr_{h \in H} [m^h_i = m^h_j] - J_{i,j}| \leq \gamma$.
\label{l_thm_prob_collision_jackard_approx_mwif}
\end{theorem}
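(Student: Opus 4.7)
The plan is to mimic the exact MWIF argument from Theorem~\ref{l_thm_prob_collision_jackard} and track the additive error introduced by the approximation. The key observation is that the event $m^h_i = m^h_j$ is exactly the event that the overall minimum of $h$ over $C_i \cup C_j$ is attained at some element lying in the intersection $C_i \cap C_j$. So I would first rewrite
\[
\Pr_{h \in H}[m^h_i = m^h_j] \;=\; \sum_{x \in C_i \cap C_j} \Pr_{h \in H}\Bigl[h(x) = \min_{a \in C_i \cup C_j} h(a)\Bigr],
\]
where the events on the right are disjoint because at most one element can be the unique minimizer (ties occur with probability zero, or can be broken deterministically by element ID).

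Next, I would apply the $\gamma$-MWIF definition to the set $C = C_i \cup C_j$ for each $x \in C_i \cap C_j$. This directly gives
\[
\left| \Pr_{h \in H}\Bigl[h(x) = \min_{a \in C_i \cup C_j} h(a)\Bigr] - \frac{1}{|C_i \cup C_j|} \right| \;\le\; \frac{\gamma}{|C_i \cup C_j|}.
\]
Summing these $|C_i \cap C_j|$ inequalities by the triangle inequality yields
\[
\left| \Pr_{h \in H}[m^h_i = m^h_j] - \frac{|C_i \cap C_j|}{|C_i \cup C_j|} \right| \;\le\; \frac{|C_i \cap C_j|}{|C_i \cup C_j|} \cdot \gamma \;=\; J_{i,j} \cdot \gamma \;\le\; \gamma,
\]
which is exactly the claimed bound (and in fact a slightly stronger form, $J_{i,j}\gamma$).

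There is no real obstacle here; the argument is essentially the same as in Theorem~\ref{l_thm_prob_collision_jackard} with the equality $\Pr = 1/|C|$ replaced by the two-sided bound $\Pr = (1 \pm \gamma)/|C|$. The only minor subtlety worth a sentence is justifying the disjointness in the first display (i.e.\ handling ties in $h$-values), which for hashes into $\Zp$ with $p>u$ happens with probability zero over the choice of $h$ when the relevant family has enough independence, or can be resolved by any fixed tie-breaking rule without affecting the probabilities.
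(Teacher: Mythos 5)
Your proof is correct. The paper does not actually prove this theorem itself (it only cites Broder et al.), but your argument is exactly the natural extension of the paper's appendix proof of Theorem~\ref{l_thm_prob_collision_jackard}: the same decomposition of the collision event as ``the global minimizer over $C_i \cup C_j$ lies in $C_i \cap C_j$,'' with the exact $1/|C|$ probability replaced by the two-sided $\gamma$-MWIF bound and the per-element errors summed via the triangle inequality; your treatment of ties matches the footnote in the paper's own Theorem~\ref{l_thm_prob_collision_jackard} proof, and you even obtain the slightly sharper bound $J_{i,j}\gamma$.
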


Rather than maintaining the full $C_i$'s, previous approaches~\cite{broder1998resemblance,broder2000min} suggest maintaining their fingerprints. Given $k$ hashes $h_1,\ldots,h_k$ randomly chosen from an $\gamma$-MWIF, we can maintain $m^{h_1}_i,\ldots,m^{h_k}_i$. Given $C_i, C_j$, for any $x \in [k]$, the probability that $m^{h_x}_i = m^{h_x}_j$ is $J_{i,j} \pm \gamma$. A hash $h_x$ where we have $m^{h_x}_i = m^{h_x}_j$ is called a hash collision. We can thus estimate $J$ by counting the proportion of collision hashes out of all the chosen hashes. In this approach, the fingerprint contains $k$ item identities in $[u]$, since for any $x$, $m^{h_x}_i$ is in $[u]$. Thus, such a fingerprint requires $k \log u$ bits. To achieve an accuracy $\epsilon$ and confidence $\delta$, such approaches require $k = O ( \frac{\ln \frac{1}{\delta}}{\epsilon^2} )$. 
Our basis for the fingerprint is a ``block fingerprint'' which allows approximating $J_{i,j}$ with a given accuracy $\epsilon$ and a confidence of $\frac{7}{8}$. This block fingerprint maintains only a \emph{single bit} per hash, as opposed to previous approaches which maintain $\log u$ bits per hash. 
Later we show how to achieve a given accuracy $\epsilon$ with a given confidence $\delta$, by combining several block fingerprints, and creating a full fingerprint. 

To shorten the fingerprints using a single bit per hash, we use a hash mapping elements in $[u]$ to a single bit --- $\phi : [u] \rightarrow \{0,1\}$, taken from a pairwise independent family (PWIF for short) of such hashes. Rather than defining $m^h_i = \arg \min_{x \in C_1} h(x)$ we define $m^{\phi,h}_i = \phi(\arg \min_{x \in C_1} h(x))$. Maintaining $m^{\phi,h}_i$ rather than $m^\phi_i$ shortens the fingerprint by a factor of $\log u$. We examine the resulting accuracy and confidence.

\begin{theorem}
$Pr_{h \in H} [m^{\phi,h}_i = m^{\phi,h}_j] = \frac{J_{i,j}}{2} + \frac{1}{2} \pm \frac{\gamma}{2}$.
\label{l_thm_prob_collision_jackard_approx_mwif_single_bit}
\end{theorem}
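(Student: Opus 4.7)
The plan is to condition on whether the two argmins under $h$ agree or not, and then invoke the two sources of randomness ($h$ from the $\gamma$-MWIF and $\phi$ from the PWIF, which I will assume are drawn independently) separately.

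Let $a = \arg\min_{x \in C_i} h(x)$ and $b = \arg\min_{x \in C_j} h(x)$, so $m_i^{\phi,h} = \phi(a)$ and $m_j^{\phi,h} = \phi(b)$. Let $A$ denote the event $\{a = b\}$, i.e.\ the event of a plain MWIF-collision. By Theorem~\ref{l_thm_prob_collision_jackard_approx_mwif} applied to $h$, we have $\Pr[A] = J_{i,j} \pm \gamma$, and hence $\Pr[\neg A] = 1 - J_{i,j} \mp \gamma$. This is the first of the two facts I need.

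Next I would analyze $\Pr[\phi(a) = \phi(b) \mid A]$ and $\Pr[\phi(a) = \phi(b) \mid \neg A]$. Conditioned on $A$, the identities $a$ and $b$ coincide, so $\phi(a) = \phi(b)$ deterministically and the conditional probability is $1$. Conditioned on $\neg A$, the two indices $a,b$ are distinct elements of $[u]$, and since $\phi$ is drawn from a pairwise independent family over $[u] \to \{0,1\}$ independently of $h$, the pair $(\phi(a),\phi(b))$ is uniform on $\{0,1\}^2$, giving $\Pr[\phi(a) = \phi(b) \mid \neg A] = \tfrac{1}{2}$. Combining via the law of total probability,
\[
\Pr[m_i^{\phi,h} = m_j^{\phi,h}] = (J_{i,j} \pm \gamma) \cdot 1 + (1 - J_{i,j} \mp \gamma) \cdot \tfrac{1}{2} = \tfrac{J_{i,j}}{2} + \tfrac{1}{2} \pm \tfrac{\gamma}{2},
\]
which is the claim.

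The only subtle point, and the step I would be most careful about, is justifying that on the event $\neg A$ the pairwise-independence of $\phi$ really applies despite $a$ and $b$ being random. Because $\phi$ is independent of $h$, one can condition on the realization of $h$ (equivalently, on the values of $a$ and $b$); for every outcome of $h$ with $a \neq b$, the PWIF property gives $\Pr_\phi[\phi(a) = \phi(b)] = \tfrac{1}{2}$ pointwise, and averaging recovers the conditional probability above. Everything else is routine arithmetic.
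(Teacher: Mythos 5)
Your proposal is correct and follows essentially the same route as the paper's proof: condition on whether the plain $h$-minima coincide, use $1$ and $\tfrac12$ for the two conditional probabilities of a $\phi$-collision, and combine via total probability with $\Pr[m_i^h=m_j^h]=J_{i,j}\pm\gamma$. Your extra care in justifying the use of pairwise independence of $\phi$ on the event $a\neq b$ (by conditioning on the realization of $h$) is a detail the paper glosses over, but the argument is the same.
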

\begin{proof}
$Pr_{h\in H,\phi\in H'}[m_i^{\phi,h} = m_j^{\phi,h}] =
      Pr[m_i^{\phi,h} = m_j^{\phi,h} | m_i^h = m_j^h] \cdot    Pr_{h\in H}[m_i^{h} =    m_j^{h}]+
      Pr[m_i^{\phi,h} = m_j^{\phi,h} | m_i^h \neq m_j^h] \cdot Pr_{h\in H}[m_i^{h} \neq m_j^{h}]=
      1 \cdot Pr_{h\in H}[m_i^{h} = m_j^{h}]+\frac{1}{2}\cdot(1-Pr_{h\in H}[m_i^{h} = m_j^{h}])=\frac{1+J_{i,j}\pm\gamma}{2}$
\end{proof}

The purpose of the fingerprint block is to provide an approximation of $J$ with accuracy $\epsilon$. We use $k$ hashes, and choose $k=\frac{8.02}{\epsilon^2}$.
Denote $\alpha = \frac{2^{10}-1}{2^{10}}$, and let $\gamma = (1-\alpha) \cdot \epsilon = \frac{1}{2^{10}} \epsilon$. We construct a $\gamma$-MWIF \footnote{The accuracy $\gamma$ is much stronger than the overall accuracy $\epsilon$ required of the full fingerprint, for reasons to be later examined}. To construct the family, consider choosing $a_0,\ldots,a_d$ and $b_0, b_1, \ldots, b_d$ uniformly at random from $[p]$, constructing the polynomials $f(x) = a_0 + a_1 x + a_2 x^2 + \ldots + a_d x^d$, $g(x) = b_0 + b_1 x + b_2 x^2 + \ldots + b_d x^d$, and using the $k$ hashes $h_i(x) = f(x) + i g(x)$, where $i \in \{0,1,\ldots,k-1\}$. We also use a hash $\phi : [u] \rightarrow \{0,1\}$ chosen from the PWIF of such hashes. We say there is a collision on $h_l$ if $m^{\phi,h_l}_i = m^{\phi,h_l}_j$, and denote the random variable $Z_l$ where $Z_l = 1$ if there is a collision on $h_l$ for users $i,j$ and $Z_l=0$ if there is no such collision. $Z_l = 1$ with probability $\frac{1}{2} + \frac{J}{2} \pm \frac{\gamma}{2}$ and $Z_l = 0$ with probability $\frac{1}{2} - \frac{J}{2} \pm \frac{\gamma}{2}$. Thus $E(Z_l) = \frac{1}{2} + \frac{J}{2} \pm \frac{\gamma}{2}$. Denote $X_l = 2 Z_l - 1$. $E(X_l) = 2 E(Z_l) - 1 = J \pm \gamma$. $X_l$ can take two values, $-1$ when $Z_l=0$, and $1$ when $Z_l=1$. Thus $X_l^2$ always takes the value of $1$, so $E(X_l^2)=1$. 
Consider $X=\sum_{l=1}^k X_l$, and take $Y=\hat{J}=\frac{X}{k}$ as an estimator for $J$. We show that for the above choice of $k$, $Y$ is accurate up to $\epsilon$ with probability of at least $\frac{7}{8}$.

\begin{theorem}[Simple Estimator] 
$Pr(|Y-J| \leq \epsilon) \geq \frac{7}{8}$. Proof given in appendix. 
\label{l_thm_jackard_estimator_single_bit}
\end{theorem}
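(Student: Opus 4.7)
The plan is to apply Chebyshev's inequality to $Y = X/k$, treating it as the empirical mean of approximately pairwise-independent $\pm 1$-valued variables. First I would handle the small bias: by linearity and Theorem~\ref{l_thm_prob_collision_jackard_approx_mwif_single_bit}, $E(X_l) = J \pm \gamma$, so $E(Y) = J \pm \gamma$, and the triangle inequality gives
\[
\{|Y - J| > \epsilon\} \subseteq \{|Y - E(Y)| > \epsilon - \gamma\} = \{|Y - E(Y)| > \alpha \epsilon\}.
\]

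Next I would bound $\mathrm{Var}(Y) = \mathrm{Var}(X)/k^2$. Since $X_l \in \{-1, +1\}$, $\mathrm{Var}(X_l) \leq E(X_l^2) = 1$, contributing at most $k$ to $\mathrm{Var}(X)$. For the cross terms I would rewrite $X_l = \psi(m_i^{h_l})\,\psi(m_j^{h_l})$ with $\psi(x) = (-1)^{\phi(x)}$, and expand
\[
E(X_l X_m) = E_{h_l, h_m} \bigl[ E_\phi [\psi(m_i^{h_l}) \psi(m_j^{h_l}) \psi(m_i^{h_m}) \psi(m_j^{h_m})] \bigr].
\]
Conditioning on the four argmin locations, case-splitting on their coincidence pattern, and invoking Lemma~\ref{l_lem_pairwise_interact} together with the pairwise independence of $\phi$ should yield $|\mathrm{Cov}(X_l, X_m)| = O(\gamma)$, so that $\mathrm{Var}(Y) \leq 1/k + O(\gamma)$.

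Finally, Chebyshev gives
\[
Pr(|Y - J| > \epsilon) \leq \frac{\mathrm{Var}(Y)}{(\alpha \epsilon)^2} \leq \frac{1}{8.02\,\alpha^2} + O\!\left(\frac{\gamma}{\alpha^2 \epsilon^2}\right).
\]
The parameters are tuned so the numerics close up: $8.02\,\alpha^2 > 8$ leaves a small gap under $1/8$, and $\gamma = \epsilon/2^{10}$ is chosen small enough (with $k = 8.02/\epsilon^2$ slightly above the nominal $8/\epsilon^2$) that the error term fits inside that gap, giving $\leq 1/8$ and hence the claim $Pr(|Y - J| \leq \epsilon) \geq 7/8$.

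The main obstacle will be establishing the covariance bound sharply enough. The hashes $h_l$ and $h_m$ are only approximately pairwise independent through the shared base pair $(f, g)$, and all $X_l$'s additionally share the single bit hash $\phi$, so $E[\psi(a)\psi(b)\psi(c)\psi(d)]$ does not decouple immediately; one has to case-split on coincidences among the four argmin locations and combine the multiplicative $(1 \pm \gamma)^2$ factorization from Lemma~\ref{l_lem_pairwise_interact} with the independence properties of $\phi$ to control the delicate ``all distinct'' case.
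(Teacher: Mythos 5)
Your proposal follows essentially the same route as the paper's appendix proof: Chebyshev's inequality applied to $Y = X/k$, the $\gamma$-bias absorbed by shrinking the deviation radius to $\beta = \alpha\epsilon = \epsilon - \gamma$, a variance bound $\mathrm{Var}(X) \leq k$ obtained by killing the cross terms via pairwise independence, and the same numerical tuning $k = 8.02/\epsilon^2$, $\gamma = \epsilon/2^{10}$ to land under $1/8$. The only divergence is that you budget an $O(\gamma)$ covariance for the cross terms (accounting for the shared $(f,g)$ and $\phi$), whereas the paper simply asserts $E(X_iX_j) = E(X_i)E(X_j)$ exactly; your more cautious reading of Lemma~\ref{l_lem_pairwise_interact} is the more defensible one and still closes within the stated slack.
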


Due to Theorem~\ref{l_thm_jackard_estimator_single_bit}, we can approximate $J$ with accuracy $\epsilon$ and confidence $\frac{7}{8}$ using a ``block fingerprint'' for $C_i$, composed of $m^{h_1, \phi_1}_i, \ldots, m^{h_k, \phi_k}_i$, where $h_1,\ldots,h_k$ are randomly constructed members of a $\gamma$-MWIF and $\phi_1,\ldots,\phi_k$ are chosen from the PWIF of hashes $\phi : [u] \rightarrow \{0,1\}$. 
We shows that it suffices to take $k=O(\frac{1}{\epsilon^2})$ to achieve this. Constructing each $h_i$ can be done by choosing $f,g$ using the base random construction and composing $h_i(x) = f(x) + i \cdot g(x)$. The base random construction chooses $f,g$ uniformly at random from $F_d$, the family of $d$-degree polynoms in $\Zp$, where $d = O(\log \frac{1}{\epsilon})$. 
This achieves a $\gamma$-MWIF where $\gamma = (1-\alpha) \cdot \epsilon = \frac{1}{2^{10}} \epsilon$. 

\paragraph{{Achieving a Desired Confidence}}
We combine several \emph{independent} fingerprints to increase the confidence to a desired level $\delta$. Section~\ref{l_sect_pairwise_block_const_conf} used a fingerprint of length $k$ to achieve a confidence of $\frac{7}{8}$. Consider taking $m$ fingerprints for each stream, each of length $k$. Given two streams, $i,j$, we have $m$ pairs of fingerprints, each approximating $J$ with accuracy $\epsilon$, and confidence $\frac{7}{8}$. Denote the estimators we obtain as $\hat{J}_1,\hat{J}_2,\ldots,\hat{J}_m$, and denote the \emph{median} of these values as $\hat{J}$. Consider using $m > \frac{32}{9} \ln \frac{1}{\delta}$ ``blocks''. 

\begin{theorem}[Median Estimator] 
$Pr(|\hat{J}-J| \leq \epsilon) \geq 1-\delta$. Proof given in appendix. 
\label{l_thm_jackard_median_estimator}
\end{theorem}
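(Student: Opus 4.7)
The plan is to apply the standard median-of-estimators boosting argument. The key observation is that the median $\hat J$ can fall outside $[J-\epsilon, J+\epsilon]$ only if at least half of the $m$ independent block estimators $\hat J_1,\ldots,\hat J_m$ fall outside that interval; so a concentration bound on the number of ``bad'' blocks will control the failure probability of $\hat J$.

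First I would formalize this. Define, for each $\ell \in \{1,\ldots,m\}$, the indicator $W_\ell = 1$ if $|\hat J_\ell - J| > \epsilon$, and $W_\ell = 0$ otherwise. By Theorem~\ref{l_thm_jackard_estimator_single_bit}, each $W_\ell$ has mean $E[W_\ell] = \Pr(|\hat J_\ell - J| > \epsilon) \leq \tfrac{1}{8}$, and the $W_\ell$'s are mutually independent because the $m$ blocks are constructed independently. Next, observe the key deterministic fact: if the median $\hat J$ lies outside $[J-\epsilon, J+\epsilon]$, then strictly more than half of the $\hat J_\ell$ must lie on the same side of $J$ at distance $>\epsilon$, and in particular $\sum_{\ell=1}^m W_\ell \geq m/2$. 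Therefore
\[
\Pr(|\hat J - J| > \epsilon) \leq \Pr\!\left(\sum_{\ell=1}^m W_\ell \geq \tfrac{m}{2}\right).
\]

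Now I would bound the right-hand side with a Hoeffding-type inequality. Since the $W_\ell \in \{0,1\}$ are independent with $E[W_\ell] \leq 1/8$, writing the bad event as a deviation of at least $3/8$ above the mean gives, by Hoeffding's inequality,
\[
\Pr\!\left(\sum_{\ell=1}^m W_\ell \geq \tfrac{m}{2}\right) \;\leq\; \exp\!\left(-2m\left(\tfrac{3}{8}\right)^2\right) \;=\; \exp\!\left(-\tfrac{9m}{32}\right).
\]
Plugging in the hypothesis $m > \tfrac{32}{9}\ln\tfrac{1}{\delta}$ yields $\exp(-9m/32) < \delta$, and combining with the previous inequality gives $\Pr(|\hat J - J| \leq \epsilon) \geq 1-\delta$, as claimed.

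The main (and really the only) obstacle is picking a concentration inequality whose constants line up with the stated threshold $m > \tfrac{32}{9}\ln\tfrac{1}{\delta}$; the Hoeffding form above is chosen precisely so that the $2(3/8)^2 = 9/32$ factor matches. An alternative would be a multiplicative Chernoff bound on $\sum W_\ell$ with mean $\leq m/8$ and deviation factor $(1+3) = 4$, but the Hoeffding route is both cleaner and gives exactly the constant in the theorem statement. Everything else is routine: independence of the blocks, the deterministic median argument, and substitution.
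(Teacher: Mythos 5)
Your proof is correct and takes essentially the same route as the paper's: both define indicator variables for ``bad'' blocks, use the deterministic fact that the median fails only when at least $m/2$ blocks are bad, and apply Hoeffding's inequality to the deviation $\tfrac{3}{8}m$ above the mean $mp \leq \tfrac{m}{8}$, obtaining $\exp(-9m/32) < \delta$ for $m > \tfrac{32}{9}\ln\tfrac{1}{\delta}$.
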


Due to Theorem~\ref{l_thm_jackard_median_estimator} to make sure that $|\hat{J} - J| \leq \epsilon$ it suffices to take $m > \frac{32}{9} \ln \frac{1}{\delta}$ fingerprints, each with $k=\frac{8.02}{\epsilon^2}$ hashes. In total, it is enough to take $\frac{32}{9} \ln \frac{1}{\delta} \cdot \frac{8.02}{\epsilon^2} \leq \frac{28.45 \ln \frac{1}{\delta}}{\epsilon^2}$ hashes. Thus, we use $O(\frac{\ln \frac{1}{\delta}}{\epsilon^2})$ hashes, storing a single bit per hash. 

\section{Fast Method for Computing the Fingerprint}
\label{l_sect_fast_compute}
We discuss speeding up the fingerprint computation. Consider computing the fingerprint for a set of $b$ items $X = \{x_1,\ldots,x_b\}$ where $x_i \in [u]$. The fingerprint is composed of $m$ ``block fingerprints'', where block $r$ is constructed using $k$ hashes $h^r_1,\ldots,h^r_k$, built using $2 \cdot d$ random coefficients in $\Zp$. 
The $i$'th location in the block is the minimal item in $X$ under $h_i$: $m_i = \arg \min_{x \in X} h_i(x)$, which is then hashed through a hash $\phi$ mapping elements in $[u]$ to a single bit. We show how to quickly compute the block fingerprint $(m_1,\ldots,m_k)$. A naive way to do this is applying $k \cdot b$ hashes to compute $h_i(x_j)$ for $i \in [k], j \in [b]$. The values $h_i(x_i)$ where $i \in [k], j \in [b]$ form a matrix, where row $i$ has the values $(h_i(x_1), \ldots, h_i(x_b))$, illustrated in Figure~\ref{fig:fingerprint-chunk}. 

\begin{figure}[ht]
\centering
\includegraphics[width=110mm,height=50mm]{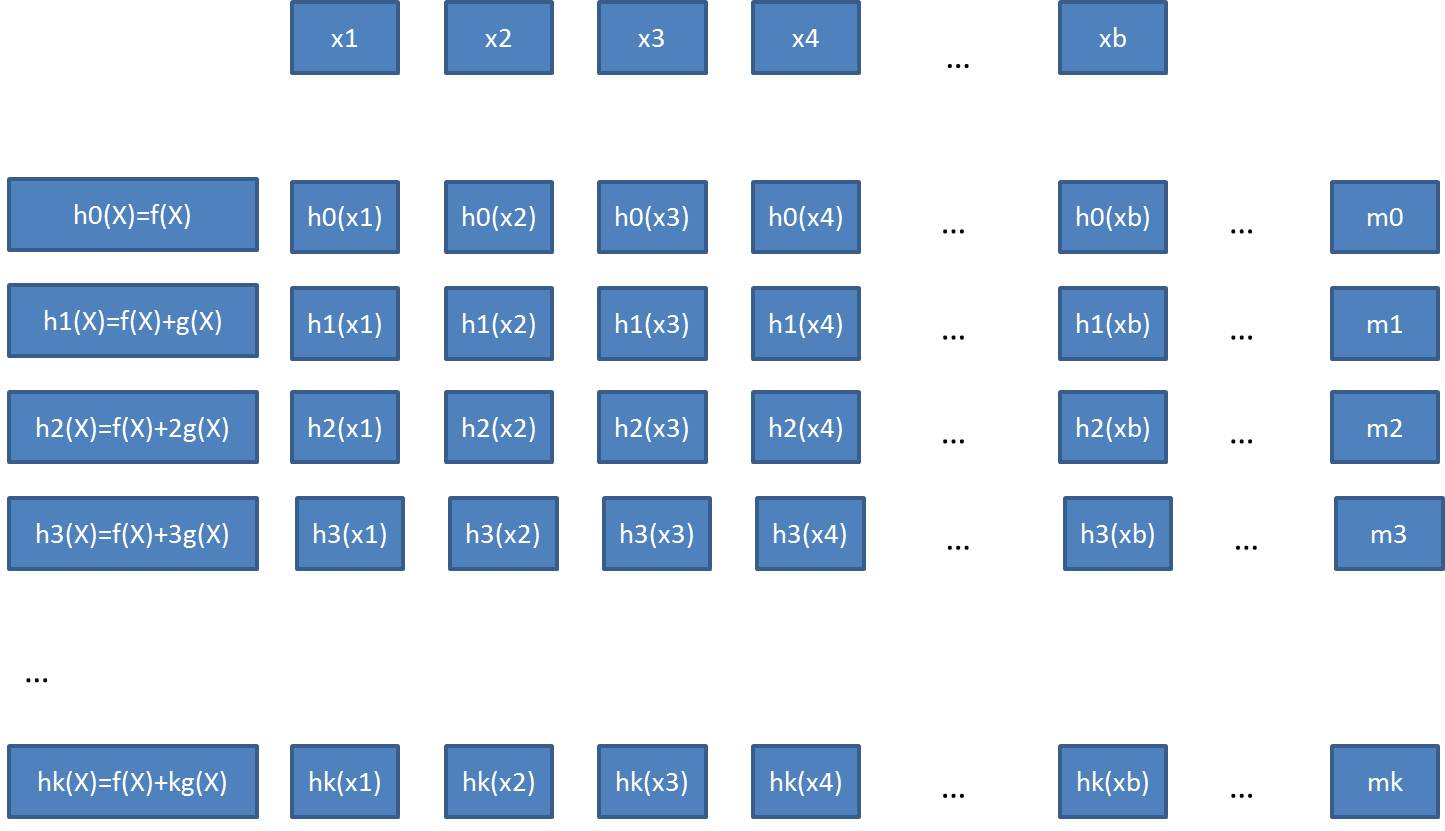}
\caption{A fingerprint ``chunk'' for a stream.}
\label{fig:fingerprint-chunk}
\end{figure}

Once all $h_i(x_j)$ values are computed for $i \in [k], j \in [b]$ , for each row $i$ we check for which column $j$ the row's minimal value occurs, and store $m_i = x_j$, as illustrated in the left of Figure~\ref{fig:fingerprint-min-row}. Thus, computing the fingerprint requires finding the minimal value across the rows (or more precisely, the value $x_j$ for the column $j$ where this minimal value occurs).
To speed up the process, we use a method similar to the one discussed in~\cite{pavan2008range} as a building block. 
Recall the hashes $h_i$ were defined as $h_i(x) = f(x) + i g(x)$ where $f(x),g(x)$ are $d$-degree polynomials with random coefficients in $\Zp$. 
Our algorithm is based on a procedure that gets a value $x \in[u]$ and a threshold $t$, and returns all elements in $(h_0(x),h_1(x),\ldots,h_{k-1}(x))$ which are smaller than $t$, as well as their locations. Formally, the method returns the index list $I_t = \{ i | h_i(x) \leq t \}$ and the value list $V_t = \{ h_i(x) | i \in I_t \}$ (note these are lists, so the $j$'th location in $V_t$, $V_t[j]$, contains $h_{I_t[j]}(x)$). We call this the \emph{column procedure}, and denote by $pr-small-loc(f(x),g(x),k,x,t)$ the function that returns $I_t$, and by $pr-small-val(f(x),g(x),k,x,t)$ the function that returns $V_t$ . We describe a certain implementation of these operations in Section~\ref{fast_pr_compute}. The running time of this implementation is $O(\log k + |I_t|)$, rather than the naive algorithm which evaluates $O(k)$ hashes. Thus, this procedure quickly finds small elements across columns (where by ``small'' we mean smaller than $t$). This is illustrated on the right of Figure~\ref{fig:fingerprint-min-row}. 

\begin{figure}[ht]
\centering
\includegraphics[width=110mm,height=40mm]{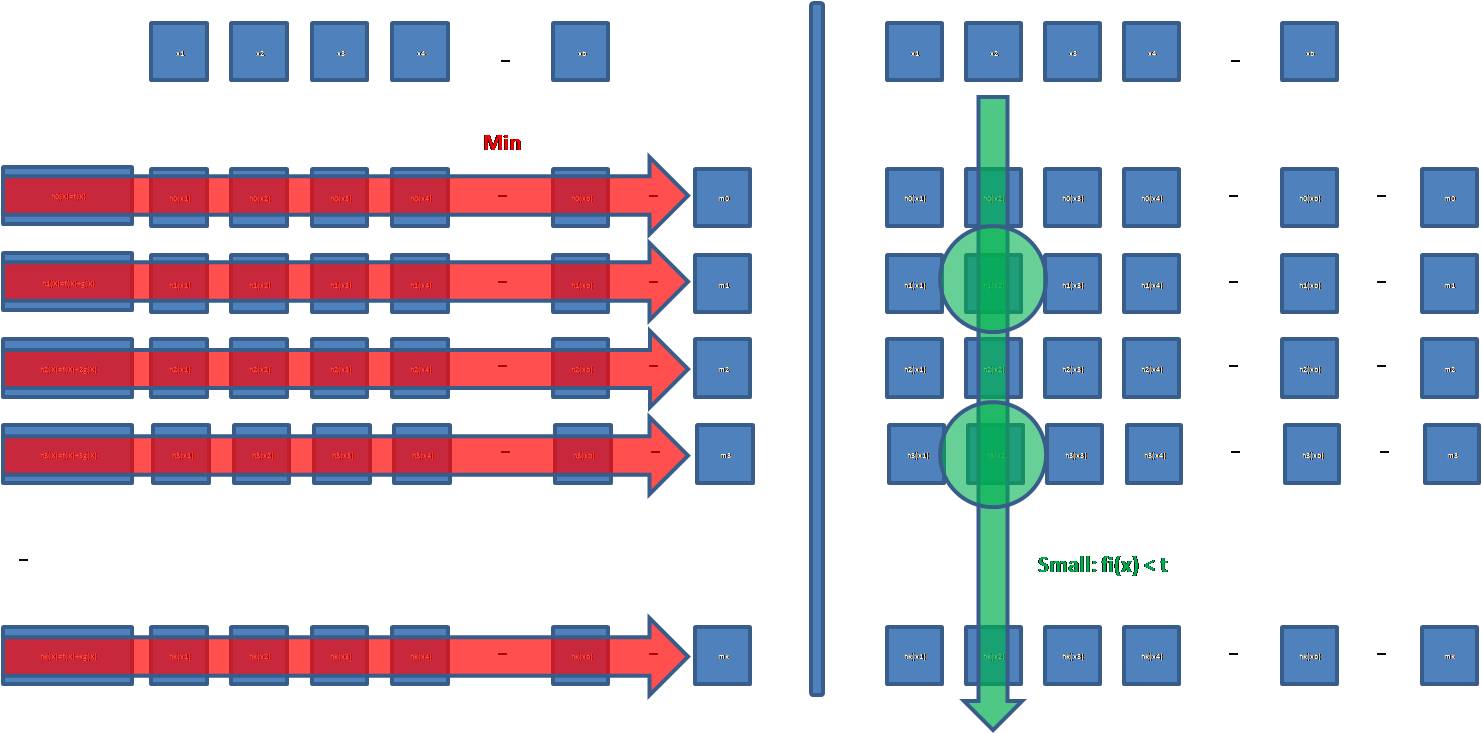}
\caption{Finding small elements across columns rather than minimal elements across rows}
\label{fig:fingerprint-min-row}
\end{figure}


Roughly speaking, our algorithm maintains a bound for the minimal value for each row, and operates by going through the columns, finding the small values in each of them, and updating the bounds for the rows where these occur. 

$block-update( (x_1,\ldots,x_b), f(x), g(x), k, t):$
\begin{enumerate}
	\item Let $m_i = \infty$ for $i \in [k]$
	\item Let $p_i = 0$      for $i \in [k]$
	\item For $j = 1$ to $b$:
	\begin{enumerate}
		\item Let $I_t = pr-small-val (f(x),g(x),k,x_j,t)$
		\item Let $V_t = pr-small-loc (f(x),g(x),k,x_j,t)$		
		\item For $y\in I_t$: // Indices of the small elements
		\begin{enumerate}
			\item If $m_{I_t[y]} > V_t[y]$ // Update to row $x$ required
			\begin{enumerate}
				\item $m_{I_t[y]} = V_t[y]$
				\item $p_{I_t[y]} = x_j$
			\end{enumerate}
	  \end{enumerate}
	\end{enumerate}
\end{enumerate}

If our method updates $m_i, p_i$ for row $i$, once the procedure is done, $m_i$ indeed contains the minimal value in that row, and $p_i$ the column where this minimal value occurs, since if even a single update occurred then the row indeed contains an item that is smaller than $t$, so the minimal item in that row is smaller than $t$ and an update would occur for that item. On the other hand, if all the items in a row are bigger than $t$, an update would not occur for that row. The running time of the column procedure is $O(\log k + |I_t|)$, which is a random variable, that depends on the number of elements returned for that column, $|I_t|$. Denote by $L_j$ the number of elements returned for column $j$ (i.e. $|I_t|$ for column $j$). Since we have $b$ columns, the running time of the block update is $O(b \log k) + O(\sum_{j=1}^b L_j)$. The total number of returned elements is $\sum_{j=1}^b L_j$, which is the total number of elements that are smaller than $t$. We denote by $Y_t = \sum_{j=1}^b L_j$ the random variable which is the number of all elements in the block that are smaller than $t$. The running time of our block update is thus $O(b \log k + Y_t)$. 

The random variable $Y_t$ depends on $t$, since the smaller $t$ is the less elements are returned and the faster the column procedure runs. On the other hand, we only update rows whose minimal value is below $t$, so if $t$ is too low we have a high probability of having rows which are not updated correctly. We show that a certain compromise $t$ value allows achieving both a good running time of the block update, with a good probability of correctly computing the values for all the rows. 

\begin{theorem}
Given the threshold $t=\frac{12 \cdot p \cdot l'}{b}$, where $l' = 80 + 2 \log \frac{1}{\epsilon}$ (so $l'=O(\log \frac{1}{\epsilon})$), the runtime of the $block-update$ procedure is $O(b\log\frac{1}{\epsilon} + \frac{1}{\epsilon^2}\log\frac{1}{\epsilon})$.
\label{l_thm_runtime_block_good_threshold}
\end{theorem}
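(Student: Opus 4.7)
The plan is to use the decomposition of the running time already established in the preceding discussion, namely $O(b\log k + Y_t)$ where $Y_t$ counts pairs $(i,j)\in[k]\times[b]$ with $h_i(x_j)<t$, and then bound each term separately. Since the block uses $k = 8.02/\epsilon^2$ hashes, $\log k = O(\log(1/\epsilon))$, so the deterministic $O(b\log k)$ term is $O(b\log(1/\epsilon))$, matching the first part of the target bound. It remains to control $Y_t$.

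Next I would compute $\mathbb{E}[Y_t]$ via linearity. From the argument used inside the proof of the Uniform Minimal Values lemma, conditioning appropriately on the base randomness shows that each coefficient of $h_i(x) = f(x) + i\cdot g(x)$ is uniform in $\mathbb{Z}_p$, so $h_i$ is a uniformly random element of $F_d$. Because $d\ge 1$, evaluating a uniform element of $F_d$ at any fixed point $x_j\in[u]$ again yields a uniform value in $[p]$, so $\Pr[h_i(x_j)<t] = t/p$. Summing over all $kb$ pairs gives
\[
\mathbb{E}[Y_t] \;=\; k\,b\cdot \frac{t}{p} \;=\; k\,b\cdot \frac{12\,l'}{b} \;=\; 12\,k\,l' \;=\; O\!\left(\tfrac{1}{\epsilon^2}\log\tfrac{1}{\epsilon}\right),
\]
matching the second target term exactly.

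The final step is to convert the expectation bound into a bound on $Y_t$ itself. The cleanest route is Markov's inequality, which already gives $Y_t = O(\mathbb{E}[Y_t])$ with constant probability, so the claimed runtime holds in expectation and with constant probability per block. If we want concentration (useful since many blocks are run), I would invoke pairwise independence: for $i\neq i'$, Lemma~\ref{l_lem_pairwise_interact} gives that the events $\{h_i(x_j)<t\}$ and $\{h_{i'}(x_{j'})<t\}$ factor up to $(1\pm\gamma)^2$, while for $i=i'$ the shared polynomial $h_i$ is $(d{+}1)$-wise independent, which is enough to make values at distinct points independent. Summing covariances then gives $\mathrm{Var}(Y_t) = O(\mathbb{E}[Y_t])$, and Chebyshev yields $Y_t = O(\mathbb{E}[Y_t])$ with high probability.

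The main obstacle is this last concentration step, because the $kb$ indicator variables are only pairwise independent through a somewhat delicate structure: the independence between rows comes from the Pairwise Interaction Lemma and costs a $(1\pm\gamma)^2$ factor, while the independence within a row comes from the degree of $f,g$, a different source. One must verify that the covariance bookkeeping across both cases really does give a variance comparable to the mean rather than something substantially larger. Provided this is done carefully, combining the $O(b\log(1/\epsilon))$ deterministic term with the $O((1/\epsilon^2)\log(1/\epsilon))$ bound on $Y_t$ yields the stated runtime.
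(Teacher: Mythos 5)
Your decomposition of the runtime as $O(b\log k)+Y_t$, the computation $\mathbb{E}[Y_t]=kb\cdot t/p=12kl'=O(\frac{1}{\epsilon^2}\log\frac{1}{\epsilon})$, and the plan of Chebyshev via pairwise independence are exactly the paper's route. The concentration step you flag as the ``main obstacle'' does go through, and more easily than you fear: since each $Y_{i,j}$ is an indicator, $E[Y_{i,j}^2]=E[Y_{i,j}]$, and pairwise independence makes every cross term $E[Y_{i,j}Y_{i',j'}]$ factor into $E[Y_{i,j}]E[Y_{i',j'}]$, so the covariance sum cancels entirely and $\mathrm{Var}(Y_t)\leq \mathbb{E}[Y_t]$; Chebyshev then gives $Y_t=O(\mathbb{E}[Y_t])$ except with probability about $\frac{1}{100}$. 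You do not need the $(1\pm\gamma)^2$ bookkeeping from the Pairwise Interaction Lemma here --- plain pairwise independence of the values $h_i(x_j)$ suffices, and within a row the $(d{+}1)$-wise independence of a single random polynomial covers the $i=i'$ case, as you note.

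There are two genuine omissions. First, a ``with constant probability'' bound on $Y_t$ is not yet a runtime guarantee; the paper converts it into one by discarding any block whose update runs too long, which is legitimate only because the resulting loss in per-block success probability ($\frac{1}{100}$ from Chebyshev, plus another small term) still leaves each block good with probability close to $\frac{7}{8}$, so the median-of-blocks analysis survives. Second, and related, your proposal never explains why $t$ cannot simply be taken smaller to shrink $Y_t$ further: the threshold must be large enough that, with high probability, \emph{every} row's true minimum falls below $t$, since $block$-$update$ only ever writes values below $t$ and would otherwise return a wrong minimum for some row. The paper establishes this using Indyk's tail bound $\Pr[\min_{x\in X}h(x)>t]\leq 48\left(\frac{6l'}{E_t}\right)^{(l'-1)/2}$ with $E_t=12l'$, giving failure probability below $\epsilon^2/2^{33}$ per row and, after a union bound over the $k$ rows, below $2^{-29}$ overall. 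Without this companion argument the specific choice $t=\frac{12pl'}{b}$ is unmotivated and the claimed runtime would be for a procedure that silently fails. Everything else in your outline matches the paper.
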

\begin{proof}
Recall that to get a $\gamma$-MWIF (for $\gamma = \frac{1}{2^{10}} \epsilon$) we used $d = O(\log\frac{1}{\gamma})$ as the degree of the random polynoms $f,g$ in the base random construction, used to compose the $h_1,\ldots,h_k$ hashes. Examining the constant in the work of Indyk~\cite{indyk2001small} shows that the requirement is $d>80+2\log\frac{1}{\epsilon}$.
Denote $l' = 80 + 2 \log \frac{1}{\epsilon}$. Due to our choice of $d$ we have $d > l'$, so the hashes $h_1,\ldots,h_k$ were effectively chosen at random from an $l'$-wise independent family. Let $H$ be an $l'-wise$ independent family of hashes. Consider the following equation from~\cite{indyk2001small}, regarding $E_t$, the expected number of elements $x \in X$ such that $h(x) \leq t$ (i.e. elements that are smaller than $t$ under $h$ chosen at random from $H$):
$Pr[min_{x \in X} h(x) >  t] \leq 48 \left( \frac{6 \cdot l'}{E_t} \right) ^{(l'-1)/2}$. 

\noindent When computing the fingerprint for the elements in $X$, we know $|X|$\footnote{We use this assumption for simplicity. If we don't know $|X|$, we can update the threshold $t$ online. We store all elements until we have $\frac{\log\frac{1}{\delta}}{\epsilon^2}$ elements. Then we set $t$ according to $b=2\frac{\log\frac{1}{\delta}}{\epsilon^2}$. We double $b$ by $2$ each time $|X|>b$ and update $t$ according to the new $b$.} and denoted $|X|=b$. Each $h_i$ is $\gamma$-MWIF, so $E_t = \frac{tb}{p}$. Now consider choosing $t=\frac{12 \cdot p \cdot l'}{b}$. Under this choice\footnote{Notice that this constant is only to bound the worst case usually in a block the maximum between the minimal values is about $l'$ moreover we can improve the running time if we drop from the sketch all the hash functions which there minimal value is to big.} of $t=\frac{12l'\cdot p}{b}$  we have $E_t = \frac{tb}{p} = 12l'$ and using the fact that $l' = 80 + 2 \log \frac{1}{\epsilon}$ the above lemma can be rewritten as:
$Pr[min_{x \in X} h(x) > t]<48 \left( \frac{6l'}{E_t} \right) ^{(l'-1)/2} = 48 \cdot \left( \frac{1}{2} \right) ^ {\frac{79}{2}} \cdot \left( \frac{1}{2} \right) ^ {2\log \frac{1}{\epsilon}} < \frac{1}{2^{33}} \cdot \epsilon^2 $.
There are $k$ rows, and by applying the union bound we obtain:
$Pr[\exists i \in [k] ( min_{x \in X} h_i(x) > t] < \frac{k \cdot \epsilon }{2^{33}} = \frac{8.02 \cdot \epsilon^2 }{2^{8.9} \cdot \epsilon^2} < \frac{1}{2^{29}} $.

We prove our algorithm runs in time $O(b\log\frac{1}{\epsilon} + \frac{1}{\epsilon^2}\log\frac{1}{\epsilon})$ with high probability. We have $kb$ random values, $h_1(x_1),\ldots, h_{k-1}(x_b)$, which are (at least) pairwise independent. Denote $Y_{i,j}$ the indicator variable of the event that $h_j(x_i) < t = \frac{12pl'}{b}$, and so $Pr[Y_{i,j}=1]=\frac{12l'}{b}$ and $E[Y_{i,j}=1]=\frac{12l'}{b}$. Then $Y=\sum_{i=0}^b\sum_{j=0}^{k-1} Y_{i,j}$. The running time of the algorithm is $O(b\log\frac{1}{\epsilon} + Y)$. We show that $Y=O(\frac{1}{\epsilon^2}\log\frac{1}{\epsilon})$ with high probability\footnote{We base our calculation on the pairwise independence of $Y_{i,j}$. Notice that $Y_{i,j}$ is more independent when running over $i$. Therefor in practice the constants are smaller.}. We obtain: $E[Y]=E[\sum_{i=0}^b\sum_{j=0}^{k-1} Y_{i,j}]=\sum_{i=0}^b\sum_{j=0}^{k-1} E[Y_{i,j}]=12 \cdot l' \cdot k$. We use the following lemma, proven in the appendix: $Var(Y) \leq E(Y)$, and using Chebychev's inequality obtain: $Pr[Y>11E(Y)]\le Pr[|Y-E(Y)|>10Var(Y)]<\frac{1}{100}$.
To guarantee the required run time in a worst case analysis, we can drop all the blocks which require too long to compute. This reduces our probability of success in each block from $\frac{7}{8}$ to at least $\frac{7}{8}-2^{-29}-\frac{1}{100}$ (The $2^{-29}$ factor is due to the probability that there exists a hash that gets a minimum value higher than $t$). Taking $4\log\frac{1}{\delta}$ blocks still obtains this probability. Overall the algorithm runs in time $O(b\log\frac{1}{\epsilon} + \frac{1}{\epsilon^2}\log\frac{1}{\epsilon})$ per block, or $O(\log\frac{1}{\delta}(b\log\frac{1}{\epsilon} + \frac{1}{\epsilon^2}\log\frac{1}{\epsilon}))$ for all blocks.
\end{proof}

\subsection{Computing The Minimal Elements of the Pseudo-Random Series}
\label{fast_pr_compute}

We give a recursive implementation of $pr-small-loc(f(x),g(x),k,x,t)$ and $pr-small-val(f(x),g(x),k,x,t)$, the procedures for computing $V_t$ and $I_t$. Recall the hashes $h_i$ were defined as $h_i(x) = f(x) + i g(x)$ where $f(x),g(x)$ are $d$-degree polynomials with random coefficients in $\Zp$. Consider a given element $x \in \Zp$ for which we attempt to find all the values (and indices) in $(h_0(x),h_2(x),\ldots,h_{k-1}(x))$ smaller than $t$.
Given $x$, we can evaluate $f(x), g(x)$ in time $O(d) = O(\log\frac{1}{\gamma})$\footnote{Using multipoint evaluation we can calculate it in amortized time $O(\log^2\log\frac{1}{\gamma})$. Moreover we can use other constructions for $d$-wise independent which can be evaluate in $O(1)$ time in the cost of using more space.}, and denote $a = f(x) \in \Zp$ and $b=g(x) \in \Zp$. Thus, we are seek all values in $\{ a \mod p,(a+b) \mod p, (a+2b) \mod p,\ldots, (a+(k-1) b)  \mod p \}$ smaller than $t$, and the indices $i$ where they occur. Consider the series $S = (s_1,\ldots,s_k)$ where $s_i = (a+ib) \mod p$ and $i = \{0,1,\ldots,k-1 \}$. We denote the arithmetic series $a+bi \mod p$ for $i \in \{0,1,\ldots,k-1\}$ as $S(a,b,k,p)$, so under this notation $S = S(a,b,k,p)$. 

Given a value we can find the index where it occurs, and vice versa. To compute the value for index $i$, we compute $(a+ib) \mod p$. To compute the index $i$ where a value $v$ occurs, we solve $v=a+ib$ in $\Zp$ (i.e. $i=\frac{v-a}{b}\mod p$).
This can be done in $O(\log p)$ time using Euclid's algorithm. Note we compute $b^{-1}$ in $\Zp$ only once to transform \emph{all} values to generating indices\footnote{We can store a table of inverse to further reduce processing time. If the required memory for the table is unavailable, we can do the computation in $F_{p^c}$ for smaller $p$ and store table of size $p$ and then calculating the inverse requires $O(c\log c)$ time. Notice that we can easily take $c<\log_{\frac{\log\frac{1}{\delta}}{\epsilon^2}} u$ which will probably be less then $\log\frac{1}{\epsilon}$}. We call a location $i$ where $s_i < s_{i-1}$ a \emph{flip location}. The first index 
is a flip location if $a-b \mod p > a$. First, consider the case $b < \frac{p}{2}$. If $s_i$ is a flip location, we have $s_{i-1} < p$ but $s_{i-1} + b > p$, so $s_i < b$. Also, since $b < \frac{p}{2}$ there is at least one location which is \emph{not} a flip location between any two flip locations. Given $S=S(a,b,k,p)$, denote by $f(S)$ the flip locations in $S$.

\begin{lemma}[Flip Locations Are Small]
\label{l_lem_flip_small}
When $b < \frac{p}{2}$, at most $\frac{k}{2}$ elements are flip locations, and all elements that are smaller than $b$ are flip locations.
\end{lemma}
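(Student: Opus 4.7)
The plan is to prove both halves of the lemma by direct arithmetic on the progression $s_i = (a + ib) \bmod p$, using only the hypothesis $b < p/2$.

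For the first claim (at most $k/2$ flip locations), I would show that flip locations cannot be consecutive. Suppose $i \ge 1$ is a flip location, so $s_i < s_{i-1}$. Since $s_i \equiv s_{i-1} + b \pmod{p}$ and $s_{i-1} + b < 2p$, the strict decrease forces a wrap-around: $s_i = s_{i-1} + b - p$. Because $s_{i-1} < p$, this immediately gives $s_i < b$. Now, $s_i + b < 2b < p$ by the hypothesis, so the step from $i$ to $i+1$ does \emph{not} wrap around, i.e., $s_{i+1} = s_i + b > s_i$, and $i+1$ is not a flip location. Hence flip locations form an independent set in the path on $k$ positions, which has size at most $\lceil k/2 \rceil$.

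For the second claim, suppose $s_i < b$ with $i \ge 1$. From $s_i \equiv s_{i-1} + b \pmod p$ we get $s_{i-1} \in \{s_i - b, \, s_i - b + p\}$. The first option is ruled out because $s_i - b < 0$, so $s_{i-1} = s_i + (p - b)$. Since $p - b > p/2 > 0$, we conclude $s_{i-1} > s_i$, i.e., $i$ is a flip location. The boundary index $i = 0$ is handled by the convention introduced just before the lemma: index $0$ is declared a flip location precisely when $(a - b) \bmod p > a$, and if $s_0 = a < b$ this condition holds because $a - b \bmod p = a - b + p = a + (p-b) > a$.

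The argument is almost entirely routine modular arithmetic; the only subtle point is the boundary case at $i=0$, which is already addressed by the paper's stated convention, and the slight imprecision in the bound $k/2$ versus $\lceil k/2 \rceil$, which is harmless since this lemma will be used only asymptotically. I do not anticipate any real obstacle beyond stating these two short arguments cleanly.
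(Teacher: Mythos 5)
Your proof is correct and follows essentially the same route as the paper: a flip at index $i$ forces a wrap-around, hence $s_i = s_{i-1}+b-p < b$, and $b<p/2$ then rules out a flip at $i+1$, giving the density bound. In fact your argument is more complete than the paper's one-line proof, which only establishes the converse of the second claim (flip locations have value below $b$) rather than the stated direction (values below $b$ force a flip), and you are also right that the sharp bound is $\lceil k/2\rceil$ rather than $k/2$ --- both harmless for how the lemma is used.
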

\begin{proof}
Note that the non-flip locations between any two flip locations are monotonically increasing. Any flip location has a value of at most $b$, since the element before a flip location is smaller than $p$ (modulo $p$), and adding $b$ to it exceeds $p$, but through this addition it is impossible to exceed $p$ by more than $b$. 
\end{proof}

We denoted by $f(S)$ the flip locations of $S$. Denote $f_0(S)=f(S)$. Denote by $f_1(S)$ all elements that occur directly after a flip location, $f_2(S)$ all elements that occur exactly two places after the closest flip locations (i.e they cannot be flip locations) and by $f_i(S)$ all elements that occur $i$ places after the closest flip location. 

\begin{lemma}[Element Comparison]
\label{l_lem_element_comp}
When $b < \frac{p}{2}$, if $x \in f_i(S)$ and $y \in f_j(S)$ where $i > j$, then $x > y$.
\end{lemma}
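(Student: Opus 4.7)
The plan is to exploit the fact that between consecutive flip locations the series advances by $b$ each step \emph{without} modular wrap, together with the bound on flip values from Lemma~\ref{l_lem_flip_small}. This reduces the comparison to a one-line integer arithmetic calculation.

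First I would invoke the previous lemma in a sharper form: at any flip location $\ell$, since $s_{\ell-1}<p$ and $s_{\ell-1}+b\geq p$, we have $s_\ell=s_{\ell-1}+b-p<b$, so every flip location carries a value strictly less than $b$. Next, by the very definition of a flip location, between two consecutive flip locations (and before the first flip, if we treat position $0$ as a reference point) the series is strictly monotone with constant increment $b$ and suffers no modular reduction. Hence, if $x$ occurs exactly $i$ positions after its nearest preceding flip location whose value is $v_x$, then in ordinary integer arithmetic
\[
x = v_x + i\,b, \qquad 0\leq v_x < b,
\]
and similarly $y = v_y + j\,b$ with $0\leq v_y < b$.

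Finally, assuming $i>j$, I would compute
\[
x - y \;=\; (v_x - v_y) + (i-j)\,b \;\geq\; -(b-1) + b \;=\; 1 \;>\; 0,
\]
which gives $x>y$ as required.

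The only subtlety I expect is the degenerate case of elements preceding the very first flip, which has no literal ``preceding flip location.'' I would handle this by adopting the convention that $s_0$ itself plays the role of such a reference when needed, noting that the monotone-run argument applies verbatim to the initial run up to (but not including) the first flip; the structural identity $x=v_x+ib$ (with no wrap) is what the proof actually needs, and it holds in any monotone run regardless of whether its left endpoint is a genuine flip or the start of the series. Apart from this bookkeeping, the argument is a direct one-line consequence of the flip-value bound $v_x,v_y<b$.
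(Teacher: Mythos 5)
Your proof is correct and follows essentially the same route as the paper's: both rest on the flip-value bound from Lemma~\ref{l_lem_flip_small} plus the fact that each non-flip step adds exactly $b$ with no modular wrap. Your explicit formula $x = v_x + i\,b$ with $0 \le v_x < b$ is simply the closed form of the paper's band-by-band induction ($f_0 < f_1 < f_2 < \cdots$), and your remark about the initial run before the first flip is a reasonable piece of bookkeeping that the paper's algorithm in fact sidesteps by treating that prefix separately.
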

\begin{proof}
All flip locations have a value of at most $b$. Due to Lemma~\ref{l_lem_flip_small}, a location directly after a flip location is not a flip location, and is thus bigger than the flip location before it by exactly $b$, and is thus greater than $b$. Thus any element in $f_1(S)$ must be greater than any element in $f_0(S)$. Using the same argument, we see that any element in $f_2(S)$ is greater than any element in $f_1(S)$ and so on. A simple induction completes the proof. 
\end{proof}

The first flip location is $\ceil{\frac{p-a}{b}}$, as to exceed $p$ we add $b$ $\ceil{\frac{p-a}{b}}$ times. Also, the number of flip locations is $\floor{\frac{a+bk}{p}}$. 
Denote the first flip location as $j=\ceil{\frac{p-a}{b}}$, with value $a' = (a+jb) \mod p$. Denote $b' = (b-p) \mod b$ and the number of flip locations as $k' = \floor{\frac{(a+bk)}{p}}$. The flip locations are known to also be an arithmetic progression~\cite{pavan2008range} \footnote{See Lemma 2 page 11.}. 

\begin{lemma}[Flip Locations Arithmetic Progression]
\label{l_lem_flip_arit_prog}
The flip locations of $S=S(a,b,k,p)$ are also an arithmetic progression $S'=(a',b',k', b)$. 
\end{lemma}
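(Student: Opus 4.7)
The plan is to show that the indices of the flips, together with their values in $[0,b)$, both behave like arithmetic progressions, and to identify the first term, the common difference, and the length. The key observation powering the whole argument is that, by the analysis in Lemma~\ref{l_lem_flip_small}, the value of $S$ at every flip location lies in $[0,b)$, because the value just before a flip is less than $p$ and we have overshot $p$ by at most $b$ when adding $b$. Thus flip values live in a small window of size $b$, which is what lets us ``reduce mod $b$''.

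First I would fix an arbitrary flip location with value $v\in[0,b)$ at some index $i$, and compute the gap to the next flip. Starting from $v$, the next flip occurs at the smallest $m\ge 1$ with $v+mb\ge p$, namely $m=\lceil (p-v)/b\rceil$, and the new flip value is $v' = v+mb-p$. The bounds $(m-1)b < p-v$ and $mb\ge p-v$ immediately give $v'\in[0,b)$. Reducing modulo $b$, $v' \equiv v - p \equiv v + b'\pmod b$ with $b' := (b-p)\bmod b$. Since both $v$ and $v'$ lie in $[0,b)$, we actually have the exact identity $v' = (v+b')\bmod b$. Iterating this, if we call the $n$-th flip value $v_n$, a simple induction gives $v_n = (v_0 + nb')\bmod b$, i.e.\ the flip values form the arithmetic progression $S(v_0,b',\cdot,b)$.

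Next I would identify the first flip, which is exactly the computation already recorded in the paragraph preceding the lemma: the smallest index $j\ge 1$ for which $a+jb\ge p$ is $j = \lceil (p-a)/b\rceil$, and then $v_0 = a' = (a+jb)\bmod p = a+jb-p\in[0,b)$, consistent with the base case of the induction. To count the flips, I would observe that each flip corresponds to exactly one wrap around $p$, so the total number of flips occurring among indices $1,\dots,k-1$ equals the total number of wraps, which is $\lfloor (a+bk)/p\rfloor$ (up to a harmless off-by-one depending on whether the endpoint itself triggers a wrap). Reading off the parameters yields $S' = (a',b',k',b)$ as claimed.

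The main obstacle is the modular step verifying that the common difference in the flip values is truly constant, namely $b'$, independent of $v$. What makes this work is that the variable $m=\lceil (p-v)/b\rceil$ in the expression $v' = v+mb-p$ disappears modulo $b$, so the congruence class of $v'-v$ mod $b$ depends only on $p$ and $b$. Everything else (the first term, the range, the length) is bookkeeping once this step is in place.
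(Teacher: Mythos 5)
Your proof is correct, and it is worth noting that the paper itself does not prove this lemma at all: it simply cites Lemma~2 of the range-efficient counting paper of Pavan and Tirthapura and states the result. So your argument is a genuine self-contained justification rather than an alternative to something in the text. The core of your argument is exactly the right one: the flip values all land in $[0,b)$ because a wrap overshoots $p$ by less than $b$, and the step $m=\lceil (p-v)/b\rceil$ cancels modulo $b$ in $v'=v+mb-p$, leaving the constant difference $b'=(-p)\bmod b$; the identification of $a'$ and $k'$ is then bookkeeping. Two small points to tidy up if you write this formally: (i) the paper's convention also counts index $0$ as a flip when $a<b$ (see the ``if $a<b$ then $j=0$ and $new_k=new_k+1$'' branch of the algorithm), so your base case and your count $k'$ should absorb that case explicitly rather than waving at an off-by-one; (ii) you should say explicitly that the object forming the arithmetic progression is the sequence of \emph{values} at the flip locations taken in index order (this is what the recursion $S'=S(a',b',k',b)$ consumes), since the flip \emph{indices} themselves are only approximately equally spaced (gaps of $\lfloor p/b\rfloor$ or $\lceil p/b\rceil$). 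Neither point affects the validity of the main modular step.
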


Given the above lemmas, we can search for the elements smaller than $t$, by examining the flip locations series in recursion. If case $b<t$, given $q=\ceil{t}{b}$, due to Lemma~\ref{l_lem_element_comp} $f(S),f_1(S),\ldots f_{q-1}(S)$ are smaller then $t$, and all of their elements must be returned. We must also scan $f_q(S)$ and also return all the elements of $f_q(S)$ which are smaller then $t$. This additional scan requires $O(|f_q(S)|)$ time $|f_q(S)|\leq |f(S)|$. Thus this case of $b < t$ examines $O(|I_t|)$ elements. Due to Lemma~\ref{l_lem_flip_small}, if $b > t$, all non-flip locations are bigger than $b$ and thus bigger than $t$, and thus we must only consider the flip-locations as candidates. Using Lemma~\ref{l_lem_flip_arit_prog} we can scan the flip locations recursively by examining the arithmetic series of the flip locations. If at most half of the elements in each recursion are flip locations, this results in a logarithmic running time. However, if $b$ is high more than half the elements are flip locations. For the case where $b > \frac{p}{2}$ we can examine the same flip-location series $S'$, in reverse order. The first element in the reversed series would be the last element of the current series, and rather than progressing in steps of $b$, we progress in steps of $p-b$. This way we obtain exactly the same elements, but in reverse order. However, in this reversed series, at most half the elements are flip locations. 
The following procedure implements the above method. It finds elements smaller then $t$ in time $O(\log k) = O(\log\frac{1}{\epsilon}+|I_t|)$ where $|I_t|$ is the number of such values. Given the returned indices, we get the values in them. We use the same $b$ for all $|I_t|$, so this can be done in time $O(c\log c + |I_t|)$ (Usually $c$ is a constant). 

\noindent $ps-min(a,b,p,k,t):$
\begin{enumerate}
	\item if $b<t$: 
	\begin{enumerate}
  	\item $V_t=[]$
    \item if $a < t$ then $V_t=V_t+[a+ib \text{ for i in range } (\ceil{\frac{t-a}{b}})]$
    \item $j=\ceil{\frac{p-a}{b}}$ // First flip (excluding first location)
    \item while $j<k$:
    \begin{enumerate}
    	\item $v=(a+jb) \mod p$
      \item while $j<k$ and $v<t$:
      \begin{enumerate}
      	\item $V_t$.append(v)
        \item $j=j+1$
        \item $v=v+b$
    \end{enumerate}
    \item $j=j+\ceil{\frac{p-v}{b}}$ //next flip location
    \item return list1
  \end{enumerate}
  \item if $b>\frac{p}{2}$ then return $f((a + (k-1) \cdot b) \mod p, p-b, p, k, t)$
  \item $j=\ceil{\frac{p-a}{b}}$
  \item $new_k=\floor{\frac{a+bk}{p}}$
    \item if $a<b$ then $j=0$ and $new_k=new_k+1$// calculate the first flip location and the number of flip locations
  \item return $f( (a+jb) \mod p, -p \mod b, b, new_k, t )$
  \end{enumerate}
\end{enumerate}

\section{Conclusions}
\label{sec:conclusions}

We have presented a fast method for computing fingerprints of massive datasets, based on pseudo-random hashes. We note that although we have examined the Jackard similarity in detail, the exact same technique can be used for any fingerprint which is based on minimal elements under several hashes. 
Thus we have described a general technique for exponentially speeding up computation of such fingerprints. Our analysis has used fingerprints using a single bit per hash. We have shown that even for these small fingerprints which can be quickly computed, the required number of hashes is asymptotically similar to previously known methods, and is logarithmic in the required confidence and polynomial in the required accuracy. Several directions remain open for future research. Can we speed up the fingerprint computation even further? Can similar techniques be used for computing fingerprints that are not based on minimal elements under hashes? 

\bibliographystyle{plain} 
\bibliography{fprf}

\section{Appendix: Proofs}
\label{l_sect_proofs}

The proof of Theorem~\ref{l_thm_prob_collision_jackard}:
$Pr_{h \in H} [m^h_i = m^h_j] = J_{i,j}$.
\begin{proof}
Denote $x=J_{1,2}$. The set $C_i \cup C_j$ contains three types of items: items that appear \emph{only} in $C_i$, items that appear \emph{only} in $C_j$, and items that appear in $C_i \cap C_j$. When an item in $C_i \cap C_j$ is minimal under $h$, i.e., for some $a \in C_i \cap C_j$ we have $h(a) = min_{x \in C_1 \cup C_2} h(x)$, we get that $min_{x \in C_i} h(x) = min_{x \in C_j} h(x)$. On the other hand, if for some $a \in C_i \cup C_j$ such that $a \notin C_i \cap C_j$ we have $h(a) = min_{x \in C_1 \cup C_2} h(x)$, the probability that $min_{x \in C_i} h(x) = min_{x \in C_j} h(x)$ is negligible \footnote{Such an event requires that two \emph{different} items, $x_i \in C_i$ and $x_j \in C_j$ would be mapped to the same value $h^* = h(x_i) = h(x_j)$, and that this value would also be the minimal value obtained when applying $h$ to both all the items in $C_i$ and in $C_j$. As discussed in~\cite{indyk2001small}, the probability for this is negligible when the range of $h$ is large enough.}. Since $H$ is MWIF, any element in $C = C_i \cup C_j$ is equally likely to be minimal under $h$. However, only elements in $I = C_i \cap C_j$ would result in $m^h_i = m^h_j$. Thus $Pr_{h \in H} [m^h_i = m^h_j] = \frac{1}{|C_i \cup C_j|} \cdot |C_i \cap C_j| = \frac{|C_i \cap C_j|}{|C_i \cup C_j|} = J_{i,j}$.
\end{proof}

The proof of Theorem~\ref{l_thm_jackard_estimator_single_bit} (Simple Estimator for Jackard With Single Bit Per Hash):
$Pr(|Y-J| \leq \epsilon) \geq \frac{7}{8}$.  
\begin{proof}
Our proof uses Chebychev's inequality:
$$ Pr(|X-E(X)| \geq \epsilon) \leq \frac{Var(X)}{\epsilon^2} $$

We have: 
$$E(X)=E(\sum_{l=1}^k X_l)=\sum_{l=1}^k E(X_l) = k \cdot (J \pm \gamma)$$ 
$$        (J-\gamma) \leq E(Y) \leq         (J+\gamma)$$
\noindent We now bound $Var(X)$:
\begin{equation}
\begin{split}
Var(X) &= E(X^2) - E^2(X) \\
 &=E((\sum_{l=1}^k X_l)^2)-E^2(\sum_{l=1}^k X_l)\\
 &=E(\sum_{l=1}^k X_l^2+2\sum_{i\neq j} X_i X_j) -(E(\sum_{l=1}^k X_l))^2\\
 &=\sum_{l=1}^k E(X_l^2)+ 2\sum_{i\neq j} E(X_i X_j)-(\sum_{l=1}^k E(X_l))^2\\
 &= \sum_{l=1}^k E(X_l^2)+ 2\sum_{i\neq j} E(X_iX_j)-(\sum_{l=1}^k E(X_l)^2 +2\sum_{i\neq j} E(X_i)E(X_j))\\
 &=\sum_{l=1}^k E(X_l^2)+ 2\sum_{i\neq j} E(X_i)E(X_j)-(\sum_{l=1}^k E(X_l)^2 +2\sum_{i\neq j} E(X_i)(X_j))\\
 &=\sum_{l=1}^k E(X_l^2)-\sum_{l=1}^k E(X_l)^2 \leq k
\end{split}
\end{equation}

\noindent We use this to bound $Var(Y)$:
$$Var(Y)=Var(\frac{1}{k} \cdot X)=\frac{1}{k^2} Var(X) \leq \frac{1}{k^2} \cdot k \leq = \frac{1}{k}$$

\noindent Using Chebychev's inequality we get that:
$$ Pr(|Y-E(Y) | > \beta ) \leq \frac{Var(Y)}{\beta^2} \leq \frac{1}{k \cdot \beta^2} $$

\noindent Denote $\alpha = \frac{2^{10}-1}{2^{10}}$. Let $\beta = \alpha \cdot \epsilon$, so we obtain:

\noindent Thus using our choice of $k=\frac{8.02}{\epsilon^2}$ and $\beta = \alpha \cdot \epsilon$ (and noting that $J \leq 1, \epsilon \leq 1$) we have:

$$ Pr(|Y-E(Y) | > \beta ) \leq \frac{1}{k {\beta}^2} = 
\frac{1}{k \cdot \alpha^2 \cdot \epsilon^2} =
\frac{1}{8.0001} \leq \frac{1}{8} $$
\end{proof}

Proof of Theorem~\ref{l_thm_jackard_median_estimator} (Median Estimator for Jackard):
$Pr(|\hat{J}-J| \leq \epsilon) \geq 1-\delta$.  
\begin{proof}
We use Hoeffding's inequality~\cite{hoeffding:1963}. Let $X_1,\ldots , X_n$ be independent random variables, where all $X_i$ are bounded so that $X_i \in [a_i, b_i]$, and let $X = \sum_{i=1}^n X_i$. Hoeffding's inequality states that:
$$\Pr(X - \mathrm{E}[X] \geq n \epsilon) \leq \exp \left( -\frac{2\,n^2\,\epsilon^2}{\sum_{i=1}^n (b_i - a_i)^2} \right)$$

We say that the estimator $\hat{J}_l$ is \emph{good} if $|\hat{J}_l-J| \leq \epsilon$ and that $\hat{J}_l$ is \emph{bad} if $|\hat{J}_l-J| > \epsilon$. Each estimator $\hat{J}_l$ is bad with probability of $p \leq \frac{1}{8}$. Consider the random variable $X_l$ where $X_l=1$ if $\hat{J}_l$ is bad, and $X_l=0$ if $\hat{J}_l$ is good. We have $Pr(X_l=1) = p \leq \frac{1}{8}$, so $E(X_l) = p \leq \frac{1}{8}$. Denote $X=\sum_{l=1}^m X_l$, so $E(X)=m \cdot p \leq \cdot m \cdot \frac{1}{8}$. We now note that the $\hat{J}$ can be bad only if at least half the estimators $\hat{J}_1,\ldots,\hat{J}_m$ are bad, or in other words, when $X \geq \frac{m}{2}$.

The $X_l$'s are independent, since for any $x,y$ the hashes used to obtain the $\hat{J}_x$ are independent of the hashes used to obtain the $\hat{J}_y$. Since $p \leq \frac{1}{8}$ we have:
$$\Pr(X \geq \frac{m}{2}) \leq \Pr(X \geq (\frac{3}{8} + p) \cdot m) = \Pr(X - mp \geq \frac{3}{8} m)$$
\noindent However, $E(X) = mp$, so using Hoeffding's inequality, we require that $\Pr(X \geq \frac{m}{2}) \leq \delta$:
$$\Pr(X \geq \frac{m}{2}) \leq \Pr(X-mp \geq \frac{3}{8} m) \leq \exp (-2m \cdot \frac{9}{64}) \leq \delta$$

Extracting $m$ we obtain that we require: 
$$m > \frac{32}{9} \ln \frac{1}{\delta}$$
\end{proof}

Proof of the lemma in Theorem~\ref{l_thm_runtime_block_good_threshold}:

\begin{lemma}
Let $Y=\sum_{i=0}^b\sum_{j=0}^{k-1} Y_{i,j}$ in Theorem~\ref{l_thm_runtime_block_good_threshold}. Then $Var[Y] \leq E[Y]$. 
\end{lemma}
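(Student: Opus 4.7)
The plan is a routine second-moment calculation that reduces everything to pairwise independence of the indicator variables $Y_{i,j}$. Since each $Y_{i,j}$ is a $\{0,1\}$-valued indicator with $\Pr[Y_{i,j}=1] = 12l'/b =: p$, we have $\mathrm{Var}[Y_{i,j}] = p(1-p) \le p = E[Y_{i,j}]$. So the whole argument comes down to expanding
\[
\mathrm{Var}[Y] \;=\; \sum_{i,j} \mathrm{Var}[Y_{i,j}] \;+\; \sum_{(i,j)\neq(i',j')} \mathrm{Cov}[Y_{i,j}, Y_{i',j'}],
\]
showing every covariance term vanishes, and then summing the diagonal terms to get $\sum_{i,j} p = E[Y]$.

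The work is therefore to establish pairwise independence of $Y_{i,j}$ and $Y_{i',j'}$ whenever $(i,j)\neq(i',j')$. Because each $Y_{i,j}$ is a function of the single random value $h_j(x_i)$, it suffices to show that $h_j(x_i)$ and $h_{j'}(x_{i'})$ are themselves pairwise independent as elements of $\Zp$. I would split into two cases. When $j=j'$ but $i\neq i'$, the hash $h_j = f + jg$ is itself, for any fixed $j$, a uniformly random polynomial in $F_d$ (its coefficients $a_l + j b_l \bmod p$ are uniform and independent across $l$), and so is $d$-wise independent on inputs; in particular it is pairwise independent, which gives independence of $h_j(x_i)$ and $h_j(x_{i'})$. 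When $j\neq j'$, I would reuse exactly the bijection argument from Lemma~\ref{l_lem_pairwise_interact}: for any target values $v_1,v_2 \in \Zp$, the number of pairs $(p_1,p_2)\in F_d \times F_d$ with $p_1(x_i)=v_1$ and $p_2(x_{i'})=v_2$ is $p^{2d}$, and each such pair corresponds to exactly one $(f,g)$, so $\Pr[h_j(x_i)=v_1 \wedge h_{j'}(x_{i'})=v_2] = p^{2d}/|F_d|^2 = 1/p^2 = \Pr[h_j(x_i)=v_1]\cdot\Pr[h_{j'}(x_{i'})=v_2]$.

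With pairwise independence in hand, every covariance is zero, and combining with the diagonal bound yields
\[
\mathrm{Var}[Y] \;\le\; \sum_{i=0}^{b}\sum_{j=0}^{k-1} p \;=\; E[Y],
\]
which is what we need. The only mildly non-trivial step is the cross-index case $j\neq j'$, but it is essentially already done in Lemma~\ref{l_lem_pairwise_interact}; everything else is bookkeeping. I would be careful to note that we only need \emph{pairwise} independence of the $Y_{i,j}$, which is why we can get away with the relatively cheap degree bound, and to remark (as the footnote in the theorem does) that in fact the $Y_{i,j}$ are more than pairwise independent when varying $i$, so the variance bound is not tight.
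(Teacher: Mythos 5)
Your proof is correct and takes essentially the same route as the paper: pairwise independence kills all covariance terms, and the diagonal terms are bounded by $E[Y]$ because the $Y_{i,j}$ are indicators (the paper writes this as $\sum E[Y_{i,j}^2]-\sum E[Y_{i,j}]^2\le\sum E[Y_{i,j}]$, which is the same computation). Your explicit verification of pairwise independence of the $h_j(x_i)$ --- splitting into the $j=j'$ case and the bijection argument for $j\neq j'$ --- is a welcome addition that the paper merely asserts.
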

\begin{proof}
\begin{equation}
\begin{split}
Var[Y] &= E[Y^2]-E^2[Y] = E[(\sum_{i=0}^b\sum_{j=0}^{k-1} Y_{i,j})^2]-E^2[\sum_{i=0}^b\sum_{j=0}^{k-1} Y_{i,j}] \\
 &= E[\sum_{i=0}^b\sum_{j=0}^{k-1} Y_{i,j}^2+2\sum_{i'\neq i}\sum_{j'\neq j} Y_{i,j} Y_{i',j'}] -(\sum_{i=0}^b\sum_{j=0}^{k-1} E[Y_{i,j}])^2 \\
 &= \sum_{i=0}^b\sum_{j=0}^{k-1} E[Y_{i,j}^2]+2\sum_{i'\neq i}\sum_{j'\neq j} E[Y_{i,j} Y_{i',j'}] -(\sum_{i=0}^b\sum_{j=0}^{k-1} E[Y_{i,j}]^2+2\sum_{i'\neq i}\sum_{j'\neq j} E[Y_{i,j}][Y_{i',j'}]) \\
 &= \sum_{i=0}^b\sum_{j=0}^{k-1} E[Y_{i,j}^2]+2\sum_{i'\neq i}\sum_{j'\neq j} E[Y_{i,j}][Y_{i',j'}] -(\sum_{i=0}^b\sum_{j=0}^{k-1} E[Y_{i,j}]^2+2\sum_{i'\neq i}\sum_{j'\neq j} E[Y_{i,j}][Y_{i',j'}]) \\
 &= \sum_{i=0}^b\sum_{j=0}^{k-1} E[Y_{i,j}^2]-\sum_{i=0}^b\sum_{j=0}^{k-1} E[Y_{i,j}]^2 = \sum_{i=0}^b\sum_{j=0}^{k-1} E[Y_{i,j}]-\sum_{i=0}^b\sum_{j=0}^{k-1} E[Y_{i,j}]^2 \\
 &= E[Y]-\sum_{i=0}^b\sum_{j=0}^{k-1} E[Y_{i,j}]^2\le E[Y]
\end{split}
\end{equation}

\end{proof}

\end{document}